%% file: main.tex
\documentclass[11pt]{article}
\usepackage{mathrsfs}
\usepackage{amsmath}
\usepackage[colorlinks=true, allcolors=blue, urlcolor=black]{hyperref}

\makeatletter
\newcommand{\removelatexerror}{\let\@latex@error\@gobble}
\makeatother
\usepackage{amsmath}
\usepackage{amsthm}
\usepackage{amsfonts}
\usepackage{amssymb}
\usepackage{graphicx}
\usepackage{dsfont}
\usepackage{csquotes}
\usepackage{mathtools}
\usepackage{cleveref}
\usepackage{pgfplots}
\pgfplotsset{compat=1.18}
\usepackage{bbm}
\usepackage{stmaryrd}

\usepackage{complexity}
\usepackage{enumitem}
\usepackage{tikz}
\usetikzlibrary{positioning}
\usepackage{thm-restate}
\usepackage{float}
\usetikzlibrary{calc}
\usepackage{thm-restate}
\usetikzlibrary{shapes,decorations.markings}
\usepackage{caption}
\usepackage{subcaption}
\usepackage{comment}

\usepackage{lipsum}

\newtheorem{theorem}{Theorem}[section]

\newtheorem{definition}[theorem]{Definition}
\newtheorem{claim}[theorem]{Claim}
\newtheorem{lemma}[theorem]{Lemma}
\newtheorem{proposition}[theorem]{Proposition}

\newtheorem{fact}[theorem]{Fact}

\renewcommand{\E}{\mathbb{E}}
\newcommand{\Var}{\textnormal{Var}}
\newcommand{\Varsub}[1]{\underset{#1}{\Var}}

\newcommand{\Esub}[1]{\underset{#1}{\E}}
\newcommand{\GapMAJ}{\mathsf{GapMAJ}}

\newcommand{\VarObsRound}[0]{\overline{\Phi}_{\textnormal{obs}}}
\newcommand{\VarBobRound}[0]{\overline{\Phi}_{\textnormal{Bob}}}

\usepackage[english]{babel}
\usepackage[T1]{fontenc}
\usepackage{caption}

\usepackage[ruled,vlined,linesnumbered]{algorithm2e}
\crefname{algocf}{Algorithm}{Algorithms}

\usepackage[margin=1in]{geometry}

\title{An Optimal Gap-Majority Lemma for \\ Bounded-Round Communication}
\author{ Pachara Sawettamalya\footnote{Department of Computer Science, Princeton University. Supported by Huacheng Yu's NSF CAREER award CCF-233994. \ \href{mailto:ps3122@princeton.edu}{\url{ps3122@princeton.edu}}}}
\date{\today}

\usepackage[backend=biber, style=alphabetic, doi=false, url=false, isbn=false, minalphanames = 3, maxalphanames = 4, maxbibnames=10]{biblatex}
\begin{document}
\maketitle

\vspace{-7mm}

\input{0-abstract}

\pagenumbering{roman}
\pagenumbering{arabic}

\input{1-intro}

\input{2-prelim}

\input{3-overview}

\input{4-variance}

\input{5-variance-1way}

\input{6-discussion}

\printbibliography

\end{document}

%% file: 0-abstract.tex
\begin{abstract}

Recent work of Sawettamalya and Yu (FOCS 2026) initiated the study of \emph{gap-majority} composition in randomized communication. They showed that if computing a $\{\pm 1\}$-valued function $f$ in $r$ rounds requires $I$ bits of information, then distinguishing whether $\sum_{i=1}^n f(X_i,Y_i)$ is at least $0.01\sqrt{n}$ or at most $-0.01\sqrt{n}$ in $r$ rounds requires $n \cdot (I-O(\log n))$ bits of information. They also conjectured that the $O(\log n)$-additive ``loss'' can be further reduced to $O(1)$.

In this note, we show that a minor adaptation of their approach, together with a more refined analysis, achieves the conjectured $O(1)$ loss, establishing a gap-majority lemma for bounded-round communication that is optimal up to constant factors.

\end{abstract}

%% file: 1-intro.tex
\section{Introduction}
\label{sec:intro}

The study of the complexity of composed functions is central to computational complexity. It asks how difficult it is to compute a composed function $g \circ f^n$ relative to computing the base function $f$ and the outer function $g$.

Within communication complexity \cite{Yao79}, the study of this question has been somewhat limited. Early work on \emph{information complexity} showed that computing $g \circ f^n$ requires $\Omega(n)$ times the information needed to compute $f$ when the outer gadget $g$ is either identity or XOR \cite{BYJKS04, BR11, JPY12, BBCR13, BRWY13, Braverman15, Yu22, IR24b, SawettamalyaY25}, meaning that one cannot hope to save information by avoiding solving all $n$ instances of $f$. Progress beyond these ``gadgets'', however, remained stagnant for decades as no other natural, explicit gadget had been shown to admit such property. Only very recently did Sawettamalya and Yu \cite{SY26} establish similar results for composition with the \emph{gap-majority} function, denoted $\GapMAJ$, defined as
\begin{align*}
\GapMAJ_n(z_1,\ldots,z_n) :=
\begin{cases}
+1 & \text{if } \sum_{i=1}^{n} z_i \geq 0.01 \sqrt{n}, \\
-1 & \text{if } \sum_{i=1}^{n} z_i \leq -0.01 \sqrt{n}.
\end{cases}
\end{align*}
In fact, \cite{SY26} proved two versions of the \emph{gap-majority} lemma summarized as follows.
\begin{enumerate}
    \item If computing $f$ requires $I$ bits of information, then computing $\GapMAJ_n \circ f^n$ requires $n \cdot (I-O(1))$ bits of information.
    \item For any $r$, if computing $f$ within $r$ rounds requires $I$ bits of information, then computing $\GapMAJ_n \circ f^n$ within $r$ rounds requires $n \cdot (I-O(\log n))$ bits of information.
\end{enumerate}
The first result gives a stronger quantitative bound ($O(1)$ loss versus $O(\log n)$), but the retaining the number of communication rounds in the second is a natural feature which appear useful for broad applications. Observing this discrepancy, they asked whether ``the best of both worlds'' could be achieved.

In this note, we record that little adjustments to the approach of \cite{SY26} combined a more precise analysis already achieves this goal. We summarize the our result as follows.

\begin{theorem}[Gap-majority lemma for bounded-round communication]
Let $f: \mathcal{X} \times \mathcal{Y} \rightarrow \{\pm 1\}$ and $\mu$ be an input distribution to $f$ such that $\E_\mu[f(x,y)] = 0$. Let $\pi$ be an $r$-round protocol that computes $\GapMAJ_n \circ f^n$ correctly with probability $0.99$ over $\mu^n$. Then, there exists an $r$-round protocol $\eta$ that computes $f$ correctly with probability $0.501$ over $\mu$ and $\IC(\eta) \le \frac{\IC(\pi)}{n} + O(1).$
\label{thm:round_gap_majority}
\end{theorem}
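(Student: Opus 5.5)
We follow the embedding-plus-variance approach of \cite{SY26}. The protocol $\eta$ on input $(x,y)\sim\mu$ picks a uniformly random coordinate $J\in[n]$. It embeds $(x,y)$ as $(X_J,Y_J)$ and samples the remaining coordinates $(X_{-J},Y_{-J})\sim\mu^{n-1}$, using public randomness for $X_{<J}$ and $Y_{>J}$ and private randomness for $Y_{<J}$ and $X_{>J}$. It then runs $\pi$ for its $r$ rounds and outputs a guess for $f(x,y)$ based on the transcript.

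By the standard direct-sum chain rule, $\IC(\eta)\le \IC(\pi)/n$ for the embedding itself, and the number of rounds is unchanged. The whole difficulty lies in the output rule. A single coordinate has only $\Theta(1/\sqrt n)$ influence on $\GapMAJ$, so copying $\pi$'s answer gives advantage only about $1/\sqrt n$. To amplify, we will let the players spend $O(1)$ extra bits of information, in place of the $O(\log n)$ used in \cite{SY26}.

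For the correctness analysis, let $\Pi$ be the transcript and define $S=\sum_i f(X_i,Y_i)$. Since $\pi$ distinguishes $S\ge 0.01\sqrt n$ from $S\le -0.01\sqrt n$ with probability $0.99$, and $S/\sqrt n$ is approximately Gaussian under $\mu^n$, a constant fraction of the mass lies in each gap region. Hence $\E[\pi\cdot S]\ge c\sqrt n$. Equivalently, the conditional means $m_i(\Pi,R)=\E[f(X_i,Y_i)\mid \Pi,R]$, where $R$ denotes the public randomness, satisfy $\sum_i \E[\pi\, m_i]\ge c\sqrt n$. By Cauchy--Schwarz this yields the variance-type bound $\E\bigl[\sum_i m_i^2\bigr]\gtrsim 1$ in the sense used in \cite{SY26}. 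The relevant quantity is the total squared bias $\Phi=\sum_i \E[m_i^2]$, or the decomposition of this into an observer-visible part and a Bob-side part, which we denote $\VarObsRound$ and $\VarBobRound$.

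The plan is to show $\E_J[\E\, m_J^2]\ge \Omega(1/n)$. It is also necessary to show that the bias $m_J$ can be converted into a constant-advantage guess. To do this, at the end of the protocol Alice and Bob jointly estimate the sign of $m_J$ by exchanging a constant number of extra bits. These bits are absorbed into the last round, so no new round is needed. Where \cite{SY26} paid $\log n$ to learn the sign of a $1/\sqrt n$-size bias via rejection sampling, we instead use a correlated-sampling or rejection-sampling step whose information cost is bounded by $O(\E[m_J^2]\cdot n)=O(1)$ in expectation.

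The main obstacle is the refined analysis. We must show that the information cost of revealing the sign of $m_J$, or of sampling a bit correlated with $f(X_J,Y_J)$ with advantage $\Omega(1)$, is $O(1)$ rather than $O(\log n)$. Concretely, we would prove a lemma stating that if $\E[m_J^2]\ge \epsilon^2$ with $\epsilon\approx 1/\sqrt n$, then one can produce a bit with bias $\Omega(\epsilon)$ at information cost $O(\epsilon^2)$. This can be boosted by running $\Theta(1/\epsilon^2)=\Theta(n)$ independent copies in parallel, one per embedded coordinate, within the same rounds. Alternatively, a threshold test on the last-round message with a constant cutoff achieves $0.501$ at $O(1)$ extra cost.

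The delicate point is handling the one-way (last-round) message. Here we first attempt a KL-divergence bound $D(\Pi_{\text{last}}\mid f=1\;\|\;\Pi_{\text{last}}\mid f=-1)=O(m_J^2)$ via Pinsker in reverse, which holds when likelihood ratios are bounded. We control unbounded ratios by a truncation argument whose failure probability contributes only $O(1)$ information.
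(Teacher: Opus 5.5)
Your embedding and the bound $\mathrm{IC}\le\mathrm{IC}(\pi)/n$ for it are fine; this is the first half of the paper's information-cost argument. The correctness half, however, has a genuine gap: you extract the wrong quantity, and you never say what the extra bits should contain. Cauchy--Schwarz applied to $\sum_i\E[A\,m_i]\ge c\sqrt n$ yields only $\sum_i\E[m_i^2]\gtrsim 1$, i.e.\ squared bias $\Theta(1/n)$ on the embedded coordinate. (Also, $A$ depends on $Y$, so $m_i$ would have to be conditioned on $A$ as well.) No output rule can beat this bound. Take $f(x,y)=(-1)^{x\oplus y}$ on uniform bits and the protocol in which Bob sends $Y$ and Alice replies with $S$. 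Then $m_i=S/n$. Bob's whole view in your $\eta$ (the transcript, all of $Y$, and $X_{<J}$) tells him about $X_{\ge J}$ only the value $\sum_{j\ge J}f(X_j,Y_j)$, which gives $f(X_J,Y_J)$ advantage $O(1/\sqrt n)$ on average over $J$. The sign of $m_J$ is the sign of $S$, which Bob already knows, so learning it with a few extra bits is useless. What Bob lacks is the contribution of the coordinates he does not see.

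Your boosting step fails for a separate reason. $\Theta(n)$ copies need $\Theta(n)$ independent runs of the embedded $\pi$: you hold a single input $(x,y)$, and copying it into every coordinate destroys $\mu^n$. That costs $\Theta(\mathrm{IC}(\pi))$, not $\mathrm{IC}(\pi)/n+O(1)$. The reverse-Pinsker/truncation sketch does not fill this in. Also, the $O(\log n)$ loss in \cite{SY26} does not come from rejection sampling; it comes from charging the extra rounded messages by their bit length.

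The missing idea is the paper's decomposition. Success on $\GapMAJ_n\circ f^n$ controls the variance of the \emph{sum}: $\VarObsRound(\pi@f^{+n})\le 0.99n$, i.e.\ $\E[(\E[S\mid MRA])^2]\ge 0.01n$. This is an $\Omega(n)$ reduction that can sit entirely in the cross-covariances, as in the example. To transfer it to one coordinate, in $\pi^i$ Alice appends to her last message, for each $j>i$ and both $a\in\{\pm1\}$, the value $\E[f^+(X_{<j},Y_{<j})\mid MRX_{<j}Y_{\ge j},A=a]$ rounded to a multiple of $\delta$. Bob knows $f^+(X_{<i},Y_{<i})$ from his private samples and subtracts it. The Bob variances then telescope to $\sum_i\VarBobRound(\pi^i@f)\le\VarObsRound(\pi@f^{+n})+n\delta^2$.

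The $O(1)$ loss comes from charging these messages $Z$ by their entropy rather than their length, in three steps:
\begin{itemize}
\item The chain rule and the rectangle property bound their total cost by $H(Z)$.
\item The increments $Z^a_{j+1}-Z^a_j$ are integer multiples of $\delta$ and differ by $O(1)$ from the increments of $S^a_j=\E[f^+\mid MRXY_{\ge j},A=a]$. This is a Doob martingale under $\pi\mid A=a$, so its squared increments sum to at most $\E[(f^+)^2\mid A=a]$. Smoothing ($\Pr[A=a\mid MRXY]\ge 0.001$) converts this into an $O(n)$ bound under $\pi$.
\item The bound $H(U)\le c_0+c_1\E[U^2]$ for integer-valued $U$ then gives $H(Z)=O(n/\delta^2)$.
\end{itemize}
With $\delta$ a small constant, $\VarBobRound(\eta@f)\le 0.99+\delta^2$, so Bob's maximum-likelihood guess succeeds with probability at least $0.501$.
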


\subsection{Concurrent work and disclaimer}

A recent manuscript by Wang \cite{Wang26} also establishes the same result, among others. The proofs appear similar in some respects, although they use different formulations.

This write-up was prepared under time constraints with assistance from GPT 6.1 Sol and does not yet reflect the author's preferred presentation. Further details on AI usage are discussed in \Cref{AI}.

%% file: 2-prelim.tex
\section{Preliminaries}
\label{sec:prelim}

We refer the reader to the preliminaries of \cite{SY26} for which we largely follow. Here, we highlight some key borrowed notions, as well as facts and terms specific to this paper.

\paragraph{Notation.} For a real-valued base function $f$, we define its \emph{$n$-fold summation} as $f^{+n}(X_1,\ldots,X_n) := f(X_1) + \ldots + f(X_n)$. We occasionally write $f^{+n}$ simply as $f^{+}$, omitting the number of summands when it is clear from context.

Following standard convention in probability, we use uppercase letters to denote random variables and lowercase letters for their realizations. We write $\E[A]$ and $\Var[A]$ for the expectation and variance of $A$, respectively, and use subscripts to indicate the variable of integration, e.g., $\E_A[f(A)]$. When clear from context, we omit the subscript.

\paragraph{Communication protocols with bounded rounds.} We view an $r$-round communication protocol $\pi$ as a joint distribution over random variables $(X,Y,M,A,R)$. Here, $(X,Y)\sim\mu$ are the players' inputs, $M=(M_1,\ldots,M_r)$ is the sequence of alternating messages, $A$ is Bob's answer, and $R$ is the public randomness. We assume that $r$ is odd, with Alice sending the first and last messages.

\paragraph{Smoothed Protocols.} Throughout this note, we take $\mathcal{A}=\{\pm1\}$ to be the output alphabet for the gap-majority composition. We assume that the protocol is \emph{smooth} in the following sense: for $(X,Y,M,R)$, Bob must answer $A=+1$ and $A=-1$ with probability at least $0.001$. This feat can be achieved by running an original protocol as is, and with probability $0.001$ Bob flips the answer $A$. It turns out that this feature is needed for technical reasons, while only sacrificing $0.001$ in the error rates which is easily affordable.

\paragraph{Information cost.} The information cost of an $r$-round protocol $\pi=(X,Y,M,A,R)$ is
\begin{align*}
\IC(\pi) := I(M:X \mid YR) + I(M:Y \mid XR),
\end{align*}
where $I(\cdot:\cdot)$ denotes mutual information.

\begin{proposition}[Rectangle property; Section 3.4 of \cite{Yu22}]
Let $\pi=(X,Y,M,A,R)$ be an $r$-round protocol with $n$-coordinate inputs $(X,Y)\sim\mu^n$. For any partition $[n]=P\cup Q$, we have $X_P \perp Y_Q \mid MRAX_QY_P$.
\label{prop:rectangle}
\end{proposition}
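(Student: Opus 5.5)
The statement just above is the Rectangle property (\Cref{prop:rectangle}), so I will prove that $X_P \perp Y_Q \mid MRAX_QY_P$. The approach is the standard rectangle argument for communication protocols, applied to a factorization of the joint distribution of $(X,Y,R,M,A)$.

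First I fix the randomness $R=\rho$; it is public and independent of the inputs, so it suffices to prove the claim conditioned on each $\rho$. Because $\mu^n$ is a product distribution over coordinates, the joint input density factors as
\begin{align*}
\mu^n(x,y) = \prod_{i\in P}\mu(x_i,y_i)\prod_{i\in Q}\mu(x_i,y_i).
\end{align*}
Coordinates in $P$ and $Q$ are independent, but within a coordinate $x_i$ and $y_i$ may be correlated.

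Next I expand the transcript probability round by round. Alice's message $M_j$ (for odd $j$) depends on $(X,\rho,M_{<j})$ and possibly private randomness, and Bob's messages and his answer $A$ depend on $(Y,\rho,M_{<j})$. This gives
\begin{align*}
\Pr[M=m,A=a\mid X=x,Y=y,\rho]=\alpha(x,m)\,\beta(y,m,a)
\end{align*}
for some nonnegative functions $\alpha$ and $\beta$. Substituting into the full joint density of $(x_P,x_Q,y_P,y_Q,m,a)$ given $\rho$ yields
\begin{align*}
\Big[\prod_{i\in P}\mu(x_i,y_i)\Big]\Big[\prod_{i\in Q}\mu(x_i,y_i)\Big]\,\alpha(x_P,x_Q,m)\,\beta(y_P,y_Q,m,a).
\end{align*}
Now condition on $(m,a,x_Q,y_P)$. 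The factor $\prod_{i\in P}\mu(x_i,y_i)\,\alpha(x_P,x_Q,m)$ depends on the free variables only through $x_P$. The factor $\prod_{i\in Q}\mu(x_i,y_i)\,\beta(y_P,y_Q,m,a)$ depends on the free variables only through $y_Q$. So the conditional density of $(x_P,y_Q)$ is a product $g(x_P)h(y_Q)$, and a product-form density means independence. This gives $X_P\perp Y_Q\mid MRAX_QY_P$.

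The only delicate step is the grouping. The $P$-coordinate input factors must be paired with Alice's $\alpha$ and the $Q$-coordinate factors with Bob's $\beta$. This works precisely because we condition on the cross variables $x_Q$ and $y_P$. Including $A$ costs nothing, since the answer is a function of Bob's view and is absorbed into $\beta$. Private randomness can be handled by summing it out inside $\alpha$ and $\beta$, which preserves the product form.
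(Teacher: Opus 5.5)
Your proof is correct and is essentially the standard rectangle argument that the paper relies on (it gives no proof of its own and cites Section 3.4 of \cite{Yu22}): given $R=\rho$, factor the joint distribution as $\mu^n(x,y)\,\alpha(x,m)\,\beta(y,m,a)$, and, once $x_Q,y_P$ are fixed, group the $P$-coordinate input factors with Alice's $\alpha$ and the $Q$-coordinate factors with Bob's $\beta$. The one point worth stating explicitly is that each player's private coins are summed out over all of that player's rounds at once (giving $\alpha(x,m)$ as the probability that Alice's coins reproduce all her messages, and $\beta(y,m,a)$ likewise for Bob's messages and answer), which uses only that the two players' private coins are independent of each other and of the inputs.
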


We will also use the following facts.

\begin{fact}
Let $X$ be a random variable and $t \in \mathbb{R}$. Then, we have
$\E[(X-t)^2] = \Var[X] + \bigl(\E[X]-t\bigr)^2.$
\label{fct:var_bound}
\end{fact}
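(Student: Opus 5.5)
We will prove \Cref{fct:var_bound} by a direct expansion around the mean. Write $m := \E[X]$, assuming $X$ has a finite second moment; otherwise both sides are $+\infty$ and the identity holds trivially. The key step is to decompose
\begin{align*}
X - t = (X - m) + (m - t),
\end{align*}
where the second summand is a deterministic constant. Squaring gives
\begin{align*}
(X-t)^2 = (X-m)^2 + 2(m-t)(X-m) + (m-t)^2 .
\end{align*}

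Next we take expectations term by term, using linearity of expectation. The first term contributes $\E[(X-m)^2] = \Var[X]$ by the definition of variance. The cross term contributes $2(m-t)\,\E[X-m]$. Since $m-t$ is a constant, it factors out, and $\E[X-m] = \E[X]-m = 0$, so the cross term vanishes. The last term is a constant, so its expectation is $(m-t)^2 = (\E[X]-t)^2$. Adding the three contributions yields $\E[(X-t)^2] = \Var[X] + (\E[X]-t)^2$, as claimed.

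There is no genuine obstacle here. The only point requiring care is integrability: when $\E[X^2]<\infty$, all three terms are finite, so splitting the expectation is legitimate. In the intended later use, $X$ is a bounded quantity such as a conditional expectation of $f^{+}$, so this condition is automatic.

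As a sanity check, setting $t=0$ recovers the familiar identity $\Var[X] = \E[X^2] - \E[X]^2$. The general form says that the mean squared error of estimating $X$ by any constant $t$ equals the variance plus the squared bias. In particular it is minimized at $t=\E[X]$, which is presumably how the fact will be used to lower-bound squared-error terms.
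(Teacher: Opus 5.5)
Your proof is correct and is the standard argument. The paper states this identity as a fact without proof, and expanding $X-t=(X-m)+(m-t)$ so that the cross term vanishes is exactly what is intended.

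Two small remarks. First, your claim that ``both sides are $+\infty$'' when the second moment is infinite needs $\E|X|<\infty$ for the right-hand side to be defined at all; this is moot here, since every quantity in the paper is bounded. Second, the paper does not use the fact to lower-bound squared errors. It uses it to upper-bound the rounding cost in \Cref{clm:decomp_var_oneway_v2}, where the squared-bias term is at most $\delta^2$.
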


\begin{lemma}
Let $U$ be an integer-valued random variable. There exist universal constants $c_0,c_1$ such that
$
H(U) \leq c_0 + c_1 \cdot \E[U^2].
$
\label{lem:entropy_bound}
\end{lemma}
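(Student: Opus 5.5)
The proof of \Cref{lem:entropy_bound} bounds the entropy of $U$ by comparing its law with a fixed reference distribution on $\mathbb{Z}$ whose log-probabilities decay quadratically. The natural choice is a discrete Gaussian: set $q(u) := e^{-u^2}/Z$ for $u\in\mathbb{Z}$, where $Z := \sum_{u\in\mathbb{Z}} e^{-u^2}$ is a finite absolute constant, since the series converges.

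Let $p$ denote the distribution of $U$. If $\E[U^2]=\infty$ the bound is vacuous, so assume $\E[U^2]<\infty$. Gibbs' inequality states that the KL divergence is nonnegative,
\begin{align*}
D(p\,\|\,q) \;=\; \sum_u p(u)\log\frac{p(u)}{q(u)} \;\ge\; 0 .
\end{align*}
Rearranging gives a cross-entropy bound:
\begin{align*}
H(U) \;=\; -\sum_u p(u)\log p(u) \;\le\; -\sum_u p(u)\log q(u) \;=\; \log Z + (\log e)\,\E[U^2].
\end{align*}
Here $\log$ is in the base used for entropy, so the factor $\log e$ handles the conversion from natural logarithms. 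This yields the claim with $c_0=\log Z$ and $c_1=\log e$.

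The one subtle point is justifying Gibbs' inequality when $U$ has infinite support. We need the cross-entropy $-\sum_u p(u)\log q(u)$ to be finite, and it is, because it equals $\log Z + (\log e)\,\E[U^2] < \infty$. With that in hand, the standard argument via Jensen's inequality (or $\ln x \le x-1$ applied termwise) goes through unchanged. If $H(U)$ itself were infinite, the same termwise inequality $p\log(1/p) \le p\log(1/q) + (q-p)\log e$ would still apply; summing it shows $H(U)$ is in fact finite and satisfies the bound. The argument is otherwise routine.
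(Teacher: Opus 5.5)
Your proof is correct and is essentially the paper's argument: the paper also compares $p$ termwise against a fixed summable reference sequence, applying $\ln x \le x$ with $x = e^{-|i|}/p_i$ to get $p_i\ln(1/p_i) \le e^{-|i|} + p_i|i| \le e^{-|i|} + p_i i^2$ and then summing over $i$. The differences are cosmetic. The paper uses unnormalized exponential weights $e^{-|i|}$ together with $|i|\le i^2$ where you use a normalized discrete Gaussian. Its bound is termwise on nonnegative quantities, so the infinite-support and infinite-entropy caveats that your last paragraph handles are automatic there.
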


\begin{proof}
For $i\in\mathbb{Z}$, let $p_i=\Pr[U=i]$. Using $\log x\leq x$ with $x = e^{-|i|}/p_i$ and the fact that $|i| \leq i^2$, we obtain $p_i \log \frac{1}{p_i} \leq e^{-|i|} + p_i \cdot |i| \leq e^{-|i|} + p_i \cdot i^2.$ 
Summing over $i$, we have
\begin{align*}
\sum_{i \in \mathbb{Z}} p_i \log \frac{1}{p_i} \leq \sum_{i\in\mathbb{Z}}e^{-|i|}
+\sum_{i\in\mathbb{Z}}p_i \cdot i^2 
= O(1)+\E[U^2].
\end{align*}
Rescaling the logarithms concludes the proof.
\end{proof}

%% file: 3-overview.tex
\section{Overview}

We follow the proof of result 2 in \cite{SY26} which uses \emph{variance} as a proxy for the error rate. Recall the following definitions used in their work.

\begin{definition}[Definition 3.4 of \cite{SY26}]
\label{def:1-way-var}
Let $\pi=(X,Y,M,A,R)$ be an $r$-round protocol, and let $f$ be a function of the input pair $(X,Y)$. Define the \emph{observer's variance} of $\pi$ with respect to $f$ by
\begin{align*}
\VarObsRound(\pi @ f) := \Esub{MRA}\,\Varsub{XY}[f(X,Y) \mid MRA]
\end{align*}
and \emph{Bob's variance} of $\pi$ with respect to $f$ by
\begin{align*}
\VarBobRound(\pi @ f) := \Esub{MRAY}\,\Varsub{X}[f(X,Y) \mid MRAY].
\end{align*}
\end{definition}

Their proof proceeds in three steps. First, they show that any protocol $\pi$ that successfully computes $\GapMAJ_n \circ f^n$ must have low observer variance.

\begin{lemma}[\cite{BravermanGW20, SY26}]
Let $f: \mathcal{X} \times \mathcal{Y} \to \{\pm1\}$, and let $\mu$ be an input distribution for $f$ satisfying $\E_\mu[f(x,y)]=0$. Suppose $\pi$ is an $r$-round protocol that computes $\GapMAJ_n \circ f^n$ with success probability $0.99$ over $\mu^n$. Then $\VarObsRound(\pi @ f^{+n}) \leq 0.99n$.
\label{lem:gapmaj_then_low_variance_1way}
\end{lemma}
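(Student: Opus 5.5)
Since $\VarObsRound(\pi @ f^{+n}) = \E_{MRA}\Var[f^{+n}\mid MRA]$, I will condition on $(M,R,A)$ and show that the conditional law of $S:=f^{+n}(X,Y)$ must lean toward the sign of $A$. This forces its conditional second moment about $0$ to exceed its variance by a definite amount. Because $\E_\mu[f]=0$ and the coordinates are i.i.d. under $\mu^n$, we have $\E[S]=0$ and $\Var[S]=n$, hence $\E[S^2]=n$. By the law of total variance and \Cref{fct:var_bound} with $t=0$,
\begin{align*}
n=\E[S^2]=\Esub{MRA}\Varsub{XY}[S\mid MRA]+\Esub{MRA}\bigl(\E[S\mid MRA]\bigr)^2 .
\end{align*}
So it suffices to show $\E_{MRA}\bigl(\E[S\mid MRA]\bigr)^2\ge 0.01n$. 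By Cauchy--Schwarz this follows from $\E[A\cdot S]\ge 0.1\sqrt n$, since
\begin{align*}
\E[AS]=\E_{MRA}\bigl[A\,\E[S\mid MRA]\bigr]\le\sqrt{\E_{MRA}\bigl(\E[S\mid MRA]\bigr)^2}.
\end{align*}

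To lower bound $\E[AS]$, write $\E[AS]=\E[|S|\cdot \mathbb{1}[A=\mathrm{sgn} S]]-\E[|S|\cdot\mathbb{1}[A\ne\mathrm{sgn}S]]$. Inputs with $|S|<0.01\sqrt n$ lie outside the promise, and there the answer is arbitrary. Their contribution is at most $0.01\sqrt n$ in absolute value, since $A$ is $\pm1$ and $|S|$ is small there. On promise inputs, the protocol errs with probability at most $0.01$. The error term $\E[|S|\mathbb{1}[\text{err}]]$ is bounded, via Cauchy--Schwarz, by $\sqrt{\E[S^2]\Pr[\text{err}]}\le 0.1\sqrt n$. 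This is too lossy, so I would refine it. The CLT (or a Berry--Esseen/Paley--Zygmund bound for sums of i.i.d. $\pm1$ with mean zero) gives $\E|S|\ge c\sqrt n$ with $c\approx\sqrt{2/\pi}\approx0.8$. For the error contribution I would use a tail truncation: $\E[|S|\mathbb{1}[\text{err}]]\le K\sqrt n\cdot 0.01+\E[|S|\mathbb{1}[|S|>K\sqrt n]]$. The last term is at most $\E[S^2]/(K\sqrt n)=\sqrt n/K$, or smaller via Hoeffding, which gives $\E[|S|\mathbb{1}[|S|>K\sqrt n]]\le O(e^{-K^2/2})\sqrt n$. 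Choosing $K$ around $3$ makes the error term roughly $0.03\sqrt n+0.02\sqrt n$. Altogether $\E[AS]\ge(0.8-0.01-2\cdot0.05-\ldots)\sqrt n\ge 0.1\sqrt n$ comfortably.

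The smoothing step (flipping $A$ with probability $0.001$) only changes $\E[AS]$ by at most $0.002\,\E|S|\le0.002\sqrt n$, so the bound survives whether the lemma is applied to the smoothed or unsmoothed protocol. The main obstacle is tracking constants so that $\E[AS]$ stays above $0.1\sqrt n$. The delicate part is the lower bound on $\E|S|$ for general (possibly non-symmetric-looking, but mean-zero $\pm1$) $f(X_i,Y_i)$. Since each $f(X_i,Y_i)$ is exactly a uniform $\pm1$ bit (mean zero and $\pm1$-valued), $S$ is a simple random walk, and $\E|S|\sim\sqrt{2n/\pi}$ holds exactly asymptotically. For small $n$ the lemma can be checked directly or absorbed into constants, e.g. using $\E|S|\ge \E[S^2]^{3/2}/\E[S^4]^{1/2}\ge\sqrt n/\sqrt3$. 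The cleaner constant-tracking route would be this fourth-moment bound, $\E[S^4]\le3n^2$, together with Hoeffding tails.
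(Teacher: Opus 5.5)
Your proof is correct and follows the standard correlation argument behind this lemma. The paper does not reprove the lemma; it imports it from \cite{BravermanGW20, SY26}. Since $S=f^{+n}(X,Y)$ is a simple random walk, $\E[S^2]=n$, and the law of total variance reduces the claim to $\E_{MRA}\bigl(\E[S\mid MRA]\bigr)^2\ge 0.01n$. That follows from $\E[AS]\ge 0.1\sqrt n$ by Cauchy--Schwarz, because $A$ is a function of $MRA$.

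One simplification: the Cauchy--Schwarz estimate $\E\bigl[|S|\,\mathbbm{1}_{\mathrm{err}}\bigr]\le\sqrt{0.01\,\E[S^2]}=0.1\sqrt n$, which you set aside as too lossy, is already enough. Combined with your fourth-moment bound $\E|S|\ge\sqrt{n/3}$, it gives $\E[AS]\ge\E|S|-0.02\sqrt n-0.2\sqrt n>0.35\sqrt n$ for every $n$. You can therefore drop the truncation/Hoeffding refinement (whose constant at $K=3$ is about $0.07$, not $0.02$), the asymptotic value $\sqrt{2/\pi}$, and any small-$n$ case analysis. ``Absorbing small $n$ into constants'' would not be legitimate here anyway, since the bound $0.99n$ is explicit; it is the uniform moment bound that handles all $n$.
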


Second, they convert $\pi$ into a new protocol $\eta$ with proportionally smaller variance and information cost. This is the step where their additive loss of $O(\log n)$ arises. Here, we improve this step so that the loss is only $O(1)$.

\begin{lemma}
Let $f: \mathcal{X} \times \mathcal{Y} \rightarrow \{\pm1\}$ be an arbitrary function, and let $\mu$ be any input distribution over $\mathcal{X} \times \mathcal{Y}$ such that $\E_\mu[f(x,y)] = 0$. Suppose $\pi$ is an $r$-round protocol over $\mu^n$. Then, for any $\Delta \in(0,1)$, there exists an $r$-round protocol $\eta$ for computing $f$ over $\mu$ such that
\begin{align*}
\IC(\eta) \leq \frac{\IC(\pi)}{n}+O(1/\Delta)
\qquad \text{and} \qquad
\VarBobRound(\eta @ f) \leq \frac{\VarObsRound(\pi @ f^{+n})}{n}+\Delta.
\end{align*}
\label{lem:main_lemma_oneway}
\end{lemma}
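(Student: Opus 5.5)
The plan is the standard direct-sum embedding plus one small extra message, costing $O(1/\Delta)$ bits, that fixes the variance accounting. The protocol $\eta$ works as follows on input $(x,y)\sim\mu$.
\begin{itemize}[leftmargin=*]
\item Public randomness selects a uniform coordinate $J\in[n]$, the prefix $X_{<J}$ and the suffix $Y_{>J}$.
\item The players plant $(x,y)$ at coordinate $J$. Alice privately samples $X_{>J}$ conditioned on $Y_{>J}$, and Bob privately samples $Y_{<J}$ conditioned on $X_{<J}$; this is possible because $\mu^n$ is a product across coordinates.
\item They run $\pi$ (including its public randomness $R$), with the modification described below.
\end{itemize}

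The bound $\IC(\eta)\le \IC(\pi)/n + (\text{extra})$ follows from the usual chain-rule argument:
\begin{align*}
\sum_j I(M : X_j \mid X_{<j}Y_{\ge j}R) &= I(M:X\mid YR),\\
\sum_j I(M:Y_j\mid X_{\le j}Y_{>j}R) &= I(M:Y\mid XR).
\end{align*}
Round-by-round, the terms where $\eta$'s players hold more private information than the conditioning are handled by \Cref{prop:rectangle} and the product structure. The number of rounds is unchanged, since the embedding is purely by sampling.

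For the variance, write $S=f^{+n}(X,Y)$ and expand $\Var[S\mid MRA]$ along the filtration $\mathcal{F}_j = (MRA, X_{\le j}, Y_{\le j})$ as a martingale:
\begin{align*}
\VarObsRound(\pi@f^{+n}) = \sum_{j=1}^n \E\bigl[D_j^2\bigr], \qquad D_j := \E[S\mid \mathcal{F}_j]-\E[S\mid\mathcal{F}_{j-1}].
\end{align*}
At coordinate $j$, the increment is
\begin{align*}
D_j = f(X_j,Y_j) + G_j - \E[f(X_j,Y_j)+G_j\mid \mathcal{F}_{j-1}],
\end{align*}
where $G_j$ is the change in the predicted suffix sum $\E[S_{>j}\mid\cdot]$. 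Bob in $\eta$ knows $\mathcal{F}_{j-1}$ (he has $X_{<J}$, $Y_{<J}$ and the transcript) and also $Y_J$. So he could estimate $f(X_J,Y_J)$ with squared error $\E[D_J^2]$ on average, were it not for the term $G_J$, which depends on the suffix Alice sampled.

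This is exactly where I expect the difficulty, and where I believe the $O(\log n)$ loss in \cite{SY26} arose. My fix is for Alice to append to her last message (she sends the last message since $r$ is odd, so no round is added) the integer
\begin{align*}
U := \mathrm{round}\bigl(G_J/\sqrt{\Delta}\bigr),
\end{align*}
or more precisely a rounding of whatever suffix-dependent quantity Bob is missing, at granularity $\sqrt\Delta$. I must check that Alice can actually compute this quantity. She knows $X_{\ge J}$, $Y_{>J}$ and the transcript, but not $Y_J$. If that is a problem, I would instead round the part of the conditional expectation that is measurable with respect to Alice's view, and use \Cref{prop:rectangle} to argue that conditioning on $Y_J$ does not change it.

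The rest is accounting. Bob outputs the estimate $\E[f+G_J\mid\mathcal{F}_{J-1}] - \sqrt\Delta\, U$, adjusted by his knowledge of $Y_J$. By \Cref{fct:var_bound}, his conditional variance is at most the mean squared error of this estimate. That error is at most $\E[D_J^2]$ plus a cross/rounding term of order $\Delta$, which averages over $J$ to $\VarObsRound(\pi@f^{+n})/n + O(\Delta)$; rescale $\Delta$ to absorb the constant.

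For the information cost, the extra message costs at most $H(U\mid J)\le c_0 + c_1\,\E[U^2]$ by \Cref{lem:entropy_bound}. Then
\begin{align*}
\E[U^2] \lesssim \frac{\E[G_J^2]}{\Delta} + 1 = O(1/\Delta),
\end{align*}
provided $\E_J\E[G_J^2]=O(1)$. I expect to obtain that bound from $\sum_j\E[G_j^2]\lesssim \Var[S]=n$, since martingale increments are orthogonal. Controlling this second moment uniformly on average is, I expect, the main technical point to verify.
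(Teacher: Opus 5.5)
Your proposal has a genuine gap at exactly the step you flagged, and the repair you suggest does not close it. In your filtration, $G_J$ is a function of $\mathcal{F}_J=(M,R,A,X_{\le J},Y_{\le J})$. So it depends not only on $Y_J$ but also on Bob's private samples $Y_{<J}$ and on Bob's answer $A$, which is generated after Alice's last message. Alice knows none of these, and Bob does not know $X_J$, so neither player can produce $U$.

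\Cref{prop:rectangle} cannot remove this dependence. It gives $X_{>J}\perp Y_{\le J}\mid MRAX_{\le J}Y_{>J}$, which eliminates $Y_{\le J}$ only when $Y_{>J}$ is in the conditioning, and your $\mathcal{F}_J$ does not contain it. Sending only an Alice-measurable part of $G_J$ leaves Bob with the extra error $G_J-\E[G_J\mid\text{Alice's view}]$, which need not be small. The identity ``Bob's error $=D_J$'' that drives your variance bound is then lost.

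Handling this obstacle is the heart of the paper's construction:
\begin{itemize}
\item It only transmits $\hat Z_j^a=\E[f^+(X_{<j},Y_{<j})\mid MRX_{<j}Y_{\ge j},A=a]$, conditioned on Alice's prefix and Bob's suffix. By \Cref{prop:rectangle} this is a function of $(M,R,X,Y_{\ge j})$, hence Alice-computable for $j>i$ in $\pi^i$.
\item It transmits both values $a=\pm1$ because Alice does not know $A$. This is where the smoothing assumption enters (\Cref{claim:sum_diff_sq}).
\item Bob in $\pi^i$ receives the whole vector $Z_{>i}$, which may by itself carry many bits. The cost is controlled only on average, via $\sum_i I(Z_{>i}:X_i\mid MRX_{<i}Y_{\ge i})\le H(Z)=O(n/\delta^2)$ and the martingale $(S_i^a)$.
\item The variance is handled by the recursive decomposition of \Cref{clm:decomp_var_oneway} and \Cref{clm:decomp_var_oneway_v2}, not by a single martingale along your $\mathcal{F}_j$.
\end{itemize}
If you want to keep a single message, you would have to switch to a rectangle-compatible filtration such as $(MRAY,X_{\le j})$. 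There $\E[S_{>J}\mid MRAYX_{\le J}]=\E[S_{>J}\mid MRAX_{\le J}Y_{>J}]$ is Alice-computable for each $a$, and Bob knows the matching $\E[S_{>J}\mid MRAYX_{<J}]$. But then Alice sends a level whose second moment can be of order $n$, not an increment. You would have to bound the entropy of its rounding conditioned on Bob's view, rather than $\E[U^2]$, and invoke smoothing for the unrealized value of $A$.

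Separately, your rounding accounting is not right as stated. With deterministic rounding at granularity $\sqrt\Delta$, Bob's error is $D_J-\epsilon_J$ with $|\epsilon_J|\le\sqrt\Delta/2$. However, $\epsilon_J$ is $\mathcal{F}_J$-measurable while $D_J$ is orthogonal only to $\mathcal{F}_{J-1}$. So the cross term $2\E[D_J\epsilon_J]$ is only $O(\sqrt\Delta)$ by Cauchy--Schwarz, not $O(\Delta)$.

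In \Cref{clm:decomp_var_oneway_v2} the loss is exactly additive for a specific reason. The rounded quantity $\hat Z_n^A$ is itself the conditional expectation of the residual target $f^+(X_{<n},Y_{<n})$ given $MRX_{<n}Y_nA$, a $\sigma$-algebra that determines the rounding error. Hence \Cref{fct:var_bound} kills the cross term. On your route you would need one of two fixes. Granularity $\Delta$ costs $O(1/\Delta^2)$ information, which is enough for \Cref{thm:round_gap_majority} but not for the lemma as stated. Alternatively, a shared random dither would make $\epsilon_J$ independent of $D_J$.
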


Third and finally, we take $\Delta$ to be a sufficiently small positive constant and have Bob output the more likely value of $f$ given his view in $\eta$. This yields the desired information cost and error probability for computing $f$. For a more detailed outline, we refer the reader to Section 3 of \cite{SY26}.

\paragraph{Setting up the proof of \Cref{lem:main_lemma_oneway}.}

The remainder of this paper is devoted to proving \Cref{lem:main_lemma_oneway}. Let $f: \mathcal{X} \times \mathcal{Y} \rightarrow \{\pm1\}$, and let $\mu$ be an input distribution. Let $\pi=(X,Y,M,A,R)$ be an $r$-round protocol over the product distribution $\mu^n$, where $X=(X_1,\ldots,X_n)$ and $Y=(Y_1,\ldots,Y_n)$, with the pairs $(X_i,Y_i)$ drawn independently from $\mu$. Write $M=(M_1,\ldots,M_r)$ for the sequence of messages. We assume that $r$ is odd, so Alice sends the last message $M_r$ and Bob outputs the answer $A\in\{\pm1\}$. Recall also the assumption that $\pi$ is \emph{smooth}: conditional on any realization of $(X,Y,M,R)$, each answer $A\in\{\pm1\}$ occurs with probability at least $0.001$. Our goal is to construct an $r$-round protocol $\eta$ with the stated bounds on information cost and variance.

Following the strategy of \cite{SY26}, we explicitly construct $n$ candidate protocols $(\pi^1,\ldots,\pi^n)$, each with input distribution $\mu$, satisfying
\begin{align*}
\sum_{i\in[n]} \IC(\pi^i)
&\leq \IC(\pi)+O(n/\Delta) \qquad \text{and} \qquad
\sum_{i\in[n]} \VarBobRound(\pi^i @ f)
\leq \VarObsRound(\pi @ f^{+n})+n\Delta .    
\end{align*}
The protocol $\eta$ then samples $i\in[n]$ uniformly using public randomness and runs $\pi^i$.

We establish the bound on the sum of variances in \Cref{sec:var} and the bound on the sum of information costs in \Cref{sec:info}, completing the proof.

%% file: 4-variance.tex
\section{Decomposition of Variances}
\label{sec:var}

In this section, we show that the variances admit a linear decomposition up to a small ``constant'' additive loss. We follow the \emph{protocol decomposition} strategy of \cite{SY26}. Given an $r$-round protocol $\pi$, we construct two smaller protocols as follows. Recall that loss parameter $\Delta \in(0,1)$. Going forward, let $\delta \in (0,1)$ be such that $\Delta = \delta^2$.

\begin{figure}[H]
    \centering
    \begin{tabular}{|p{15.8cm}|}
    \hline ~\\
    \multicolumn{1}{|c|}{\textbf{A protocol $\pi^{<n}$ over the input distribution $(X',Y')\sim\mu^{n-1}$}} \\
    \\ \hline
    \vspace{-2.85mm}
    \begin{enumerate}
        \item Alice pretends that her input consists of $n$ coordinates and writes it as $X=(X_1,\ldots,X_n)$.
        \vspace{-5.8mm}
        \item Bob pretends that his input consists of $n$ coordinates and writes it as $Y=(Y_1,\ldots,Y_n)$.
        \vspace{-1mm}
        \item Alice embeds $X'$ into $X_{<n}$, and Bob embeds $Y'$ into $Y_{<n}$.
        \vspace{-1mm}
        \item Using public randomness, the players draw $Y_n$ from its marginal under $\mu$. Alice then privately draws $X_n$ from its conditional distribution given $Y_n$.
        \vspace{-1mm}
        \item Alice and Bob simulate $\pi$ on $(X,Y)$, generating the messages $M=(M_1,\ldots,M_r)$. Alice augments her final message as described below.
        \vspace{-1mm}
        \item For each $a\in \{\pm1\}$, Alice computes $$\hat{Z}_n^a:=\Esub{Y{<n}}\left[f^+(X_{<n},Y_{<n})\mid MRX_{<n}Y_n,A=a\right].$$ She rounds each value to the nearest multiple of $\delta$ to obtain $Z_n^a.$
        \item Alice appends $Z_n:=(Z_n^{+1},Z_n^{-1})$ to her final message $M_r$.
        \vspace{-1mm}
        \item Bob generates $A$ according to $\pi$.
    \end{enumerate}
    \vspace{0mm}
    \\
    \hline
    \end{tabular}
    \caption{The protocol $\pi^{<n}$ obtained by decomposing $\pi$.}
    \label{pi<n}
\end{figure}

\begin{figure}[H]
    \centering
    \begin{tabular}{|p{15.7cm}|}
    \hline ~\\
    \multicolumn{1}{|c|}{\textbf{A protocol $\pi^n$ over the input distribution $(X',Y')\sim\mu$}} \\
    \\ \hline
    \vspace{-2.85mm}
    \begin{enumerate}
        \item Alice pretends that her input consists of $n$ coordinates and writes it as $X=(X_1,\ldots,X_n)$.
        \hspace{-1mm}
        \item Bob pretends that his input consists of $n$ coordinates and writes it as $Y=(Y_1,\ldots,Y_n)$.
        \hspace{-1mm}
        \item Alice embeds $X'$ into $X_n$, and Bob embeds $Y'$ into $Y_n$.
        \hspace{-1mm}
        \item Using public randomness, the players draw $X_{<n}$ from its marginal under $\mu^{n-1}$. Bob then privately draws $Y_{<n}$ from its conditional distribution given $X_{<n}$.
        \hspace{-1mm}
        \item Alice and Bob simulate the $r$ rounds of $\pi$ on $(X,Y)$, generating the messages $M=(M_1,\ldots,M_r)$.
        \vspace{-1mm}
        \item Bob generates $A$ according to $\pi$.
    \end{enumerate}
    \vspace{0mm}
    \\
    \hline
    \end{tabular}
    \caption{The protocol $\pi^n$ obtained by decomposing $\pi$.}
    \label{pi_n}
\end{figure}

Equivalently, we may take the following distributional view of the two protocols:
\begin{itemize}
    \item $\pi^n=(X_n,Y_n,M,A,RX_{<n})$.
    \item $\pi^{<n}=(X_{<n},Y_{<n},(M,Z_n),A,RY_n)$, where $Z_n=(Z_n^{+1},Z_n^{-1})$ is computable by Alice alone. Here, $(M,Z_n)$ denotes the $r$-round transcript obtained by appending $Z_n$ to Alice's final message $M_r$.
\end{itemize}

For the analysis, let $\hat{\pi}^{<n}$ be the same protocol as $\pi^{<n}$, except that Alice sends $(\hat{Z}_n^{+1},\hat{Z}_n^{-1})$ instead of $(Z_n^{+1},Z_n^{-1})$. The following lemma follows from Claim 5.2 of \cite{SY26}.

\begin{lemma}[Claim 5.2 of \cite{SY26}]
$
\VarBobRound(\pi^n @ f)
+\VarObsRound(\hat{\pi}^{<n} @ f^{+(n-1)})
\leq \VarObsRound(\pi @ f^{+n}).
$
\label{clm:decomp_var_oneway}
\end{lemma}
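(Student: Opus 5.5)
The plan is to work entirely inside the joint distribution of $\pi=(X,Y,M,A,R)$ over $\mu^n$. The point is that both $\pi^n$ and $\hat{\pi}^{<n}$ are just re-labelings of this same distribution. In $\pi^n$, Bob's view is $(M,A,R,X_{<n},Y_n)$, since $X_{<n}$ is public randomness and $Y_n$ is his input. In $\hat{\pi}^{<n}$, the observer's view is $(M,\hat Z_n,A,R,Y_n)$. The sampling steps (public $Y_n$ followed by private $X_n$, and public $X_{<n}$ followed by private $Y_{<n}$) reproduce exactly the law $\mu^n$ of $(X,Y)$. The messages and Bob's answer $A$ are then generated as in $\pi$. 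So every variance in the statement is a conditional variance with respect to the same probability space.

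Write $S:=f^{+(n-1)}(X_{<n},Y_{<n})$ and $f_n:=f(X_n,Y_n)$, so that $f^{+n}=S+f_n$. Let $G:=(M,R,A,X_{<n},Y_n)$ denote the refined view. I will carry out three steps.

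\emph{Step 1: refine the conditioning.} Since $G$ refines $(M,R,A)$, the law of total variance gives
\begin{align*}
\VarObsRound(\pi @ f^{+n}) = \E\,\Var[S+f_n\mid MRA] \ \ge\ \E\,\Var[S+f_n\mid G].
\end{align*}

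\emph{Step 2: split the variance using the rectangle property.} Apply \Cref{prop:rectangle} with $P=\{n\}$ and $Q=[n-1]$. This gives $X_n\perp Y_{<n}\mid MRAX_{<n}Y_n$. Conditioned on $G$, the quantity $S$ is a function of $Y_{<n}$ alone (as $X_{<n}$ is fixed), and $f_n$ is a function of $X_n$ alone (as $Y_n$ is fixed). Hence $S$ and $f_n$ are conditionally independent given $G$, and
\begin{align*}
\E\,\Var[S+f_n\mid G]=\E\,\Var[f_n\mid G]+\E\,\Var[S\mid G].
\end{align*}
The first term is exactly $\VarBobRound(\pi^n @ f)$, because $G$ is Bob's full view in $\pi^n$.

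\emph{Step 3: bound the observer's variance in $\hat{\pi}^{<n}$.} It remains to show $\VarObsRound(\hat{\pi}^{<n} @ f^{+(n-1)})\le \E\,\Var[S\mid G]$. By definition $\hat Z_n^{a}=\E[S\mid MRX_{<n}Y_n, A=a]$, so $\hat Z_n^{A}=\E[S\mid G]$. The observer $O:=(M,\hat Z_n,A,R,Y_n)$ sees both $\hat Z_n$ and $A$, so $\hat Z_n^A$ is a function of $O$. By \Cref{fct:var_bound} with $t=\hat Z_n^A$,
\begin{align*}
\Var[S\mid O]\le \E\bigl[(S-\hat Z_n^A)^2\mid O\bigr].
\end{align*}
Averaging over $O$ gives $\E\,\Var[S\mid O]\le \E\,(S-\E[S\mid G])^2=\E\,\Var[S\mid G]$. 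Combining Steps 1–3 yields the claimed inequality.

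The only delicate point is Step 3. One must check that the observer really has access to $\E[S\mid G]$. This is exactly why Alice sends both values $\hat Z_n^{\pm1}$: she does not know $A$ when she sends her last message, but the observer, who knows $A$, can select the correct one. One must also check that the conditional expectation computed by Alice is taken under the same joint law as in $\pi$. This holds because the simulated distribution coincides with that of $\pi$, and Bob's answer $A$ is produced from the same conditional law given $(M,R,Y)$.
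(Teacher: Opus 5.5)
Your proof is correct. It follows the same route as the argument the paper imports from Claim 5.2 of \cite{SY26}, whose shape reappears in the proof of \Cref{clm:decomp_var_oneway_v2}: refine the conditioning to $G=(M,R,A,X_{<n},Y_n)$, split $\Var[f^{+n}\mid G]$ via the rectangle property into Bob's variance in $\pi^n$ plus $\Var[f^{+(n-1)}\mid G]$, and use that the observer of $\hat{\pi}^{<n}$ can read off $\hat Z_n^A=\E[f^{+(n-1)}\mid G]$. As a small sharpening, your Step~3 actually holds with equality: your $O$ is a function of $G$, so $\E[S\mid O]=\hat Z_n^A$.
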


We next bound the additional variance introduced by rounding.

\begin{claim} $
\VarObsRound(\pi^{<n} @ f^{+(n-1)})
\leq \VarObsRound(\hat{\pi}^{<n} @ f^{+(n-1)})+\delta^2.$\label{clm:decomp_var_oneway_v2}
\end{claim}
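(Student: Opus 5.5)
The plan is to use the fact that the conditional variance is the smallest mean-squared error of any predictor that is measurable with respect to the conditioning variables. I will compute the observer's variance of $\hat{\pi}^{<n}$ exactly. Then I will bound the observer's variance of $\pi^{<n}$ by exhibiting one concrete predictor, namely the rounded value $Z_n^A$, and comparing its error to that of $\hat{Z}_n^A$.

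Write $F := f^{+(n-1)}(X_{<n},Y_{<n})$. Let $\mathcal{G} := (M,R,X_{<n},Y_n,A)$ be the fine conditioning, so that $\hat{Z}_n^a = \E[F \mid M R X_{<n} Y_n, A=a]$. Plugging in the realized answer gives $\hat{Z}_n^A = \E[F \mid \mathcal{G}]$.

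\textbf{Step 1: the unrounded protocol.} The observer's view in $\hat{\pi}^{<n}$ is $\hat{V} := (M,\hat{Z}_n^{+1},\hat{Z}_n^{-1},R,Y_n,A)$. Every component of $\hat{V}$ is a function of $\mathcal{G}$, so $\hat V$ is coarser than $\mathcal{G}$. The quantity $\hat{Z}_n^A$ is itself a function of $\hat{V}$, since both $\hat{Z}_n^{\pm1}$ and $A$ are in the view. By the tower property,
\[
\E[F\mid \hat V]=\E\bigl[\E[F\mid \mathcal{G}]\mid \hat V\bigr]=\E[\hat{Z}_n^A\mid \hat V]=\hat{Z}_n^A .
\]
Hence $\VarObsRound(\hat{\pi}^{<n} @ f^{+(n-1)}) = \E[(F-\hat{Z}_n^A)^2]$.

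\textbf{Step 2: the rounded protocol.} The observer's view in $\pi^{<n}$ is $V := (M,Z_n^{+1},Z_n^{-1},R,Y_n,A)$, and $Z_n^A$ is a function of $V$. For each realization of $V$, apply \Cref{fct:var_bound} with $t = Z_n^A$. This gives $\Var[F\mid V] \le \E[(F-Z_n^A)^2\mid V]$, and averaging over $V$ yields
\[
\VarObsRound(\pi^{<n} @ f^{+(n-1)}) \le \E[(F-Z_n^A)^2].
\]
Now expand around $\hat{Z}_n^A$:
\[
\E[(F-Z_n^A)^2]=\E[(F-\hat Z_n^A)^2]+2\,\E[(F-\hat Z_n^A)(\hat Z_n^A-Z_n^A)]+\E[(\hat Z_n^A-Z_n^A)^2].
\]
The cross term vanishes. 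The factor $\hat Z_n^A - Z_n^A$ is a function of $\mathcal{G}$, and $\E[F-\hat Z_n^A\mid\mathcal{G}]=0$ by definition of $\hat Z_n^A$. The last term is at most $(\delta/2)^2\le\delta^2$ pointwise, because rounding to the nearest multiple of $\delta$ moves a value by at most $\delta/2$. Combining with Step 1 gives the claim.

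\textbf{Main obstacle.} The only delicate point is measurability bookkeeping. First, $\hat{Z}_n^A$ must equal the observer's conditional expectation in $\hat\pi^{<n}$, which needs the observer's view to be sandwiched between $\sigma(\hat Z_n^A)$ and $\mathcal{G}$. Second, the rounding error must be a function of $\mathcal{G}$, so that the cross term is killed by conditioning on $\mathcal{G}$. Both hold because $A$ is part of the observer's view and Alice sends both candidate values $Z_n^{\pm1}$; the observer can then select the coordinate matching the realized $A$. This is why Alice sends both values rather than committing to one before $A$ is generated. No smoothness or rectangle property is needed for this claim.
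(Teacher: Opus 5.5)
Your proof is correct, and it takes a slightly different and cleaner route than the paper's.

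The paper introduces $\widetilde{Z}_n^A$, intended as the observer's conditional mean of $F$ in the rounded protocol $\pi^{<n}$. It writes $\VarObsRound(\pi^{<n} @ f^{+(n-1)})$ \emph{exactly} as $\E[(F-\widetilde{Z}_n^A)^2]$. It then applies \Cref{fct:var_bound} conditionally on $\mathcal{G}=(M,R,X_{<n},Y_n,A)$, splitting this into $\E\,\Var[F\mid\mathcal{G}]$ plus a squared discrepancy that it bounds by $\delta^2$.

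You instead plug in the explicit $V$-measurable predictor $Z_n^A$, which gives an inequality immediately. You then use the same orthogonality with respect to $\mathcal{G}$.

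Your choice makes the discrepancy $|\hat{Z}_n^A-Z_n^A|\le\delta/2$ directly from the rounding. In the paper's version, the term produced by \Cref{fct:var_bound} is really $(\hat{Z}_n^A-\widetilde{Z}_n^A)^2$. Bounding it by $\delta^2$ needs an extra, unstated observation: $\widetilde{Z}_n^A=\E[\hat{Z}_n^A\mid V]$ is an average of values within $\delta/2$ of $Z_n^A$, after which a triangle inequality finishes.

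You also get two further benefits:
\begin{itemize}
\item the sharper constant $\delta^2/4$;
\item an explicit proof, in your Step 1, of the identity $\VarObsRound(\hat{\pi}^{<n} @ f^{+(n-1)})=\E[(F-\hat{Z}_n^A)^2]=\E\,\Var[F\mid\mathcal{G}]$, which the paper uses tacitly at the end of its proof.
\end{itemize}

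The paper's route gives an exact expression for the rounded protocol's variance before bounding it, but that extra precision is not needed for this claim.
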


\begin{proof} 
Denote $ \widetilde{Z}_n^A := \Esub{Y_{<n}} \left[f^+(X_{<n}, Y_{<n}) \mid MRX_{<n}Y_n\hat{Z}_n^{A} A\right]$ the expectation conditioned on rounded value of $\hat{Z}_n^A$. Following the proof of \Cref{clm:decomp_var_oneway} (i.e. Claim 5.2 of SY26), we can derive that 

\begin{align}
    \VarObsRound(\pi^{<n} @ f^{+(n-1)}) & = \Esub{MRX_{<n}Y_nA} \ \Esub{Y_{<n}} \left[(f^+(X_{<n}Y_{<n}) - \widetilde{Z}_n^A)^2\mid MRX_{<n}Y_nA\right]
    \label{eqx}
\end{align}
Using \Cref{fct:var_bound}, for a fixed realization of $MRX_{<n}Y_nA$, we have 
\begin{align*}
    \Esub{Y_{<n}} \left[(f^+(X_{<n}Y_{<n}) - \widetilde{Z}_n^A)^2\mid MRX_{<n}Y_nA\right] & = \Varsub{Y_{<n}} \left[f^+(X_{<n}Y_{<n} \mid MRX_{<n}Y_nA\right] + (\widetilde{Z}_n^A - Z_n^A)^2 \\
    & \leq \Varsub{Y_{<n}} \left[f^+(X_{<n}Y_{<n} \mid MRX_{<n}Y_nA\right] + \delta^2 
\end{align*}
where the final inequality follows $|\widetilde{Z}_n^A - Z_n^A|\leq \delta$. Taking the expectation over $MRX_{<n}Y_nA$ and combining with (\ref{eqx}) completes the proof.
\end{proof}

We now prove the decomposition of variances.

\begin{claim}
$
\sum_{i=1}^n \VarBobRound(\pi^n @ f)
\leq \VarObsRound(\pi @ f^{+n})+ n\delta^2
$
\label{claim:var_decomp}
\end{claim}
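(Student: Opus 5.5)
The statement as typeset has a typo: the summand should be $\VarBobRound(\pi^i @ f)$, where $\pi^i$ is the single-coordinate protocol for coordinate $i$. I will prove it by induction on $n$, peeling off the last coordinate each time using \Cref{clm:decomp_var_oneway} and \Cref{clm:decomp_var_oneway_v2}.

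\textbf{Defining the protocols.} Set $\pi_{(n)} := \pi$. For $k = n, n-1, \ldots, 2$, decompose the $k$-coordinate protocol $\pi_{(k)}$ as in the two figures:
\begin{itemize}
\item $\pi^k := (\pi_{(k)})^k$ is the single-coordinate protocol for coordinate $k$;
\item $\pi_{(k-1)} := (\pi_{(k)})^{<k}$ is the $(k-1)$-coordinate protocol, with the rounded values $Z_k$ appended to Alice's last message.
\end{itemize}
Finally set $\pi^1 := \pi_{(1)}$. Each $\pi_{(k)}$ is again an $r$-round protocol over $\mu^k$: the appended $Z_k$ lies inside $M_r$, which Alice sends last. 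So the decomposition can be reapplied. Smoothness is preserved as well, since Bob's answer is still produced exactly as in $\pi$.

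\textbf{One step of the recursion.} Apply \Cref{clm:decomp_var_oneway} to $\pi_{(k)}$ with $k$ in place of $n$, and then \Cref{clm:decomp_var_oneway_v2}. This gives
\begin{align*}
\VarBobRound(\pi^k @ f) + \VarObsRound(\pi_{(k-1)} @ f^{+(k-1)}) \le \VarObsRound(\pi_{(k)} @ f^{+k}) + \delta^2 .
\end{align*}

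\textbf{Base case.} At $k=1$ I need $\VarBobRound(\pi^1 @ f) \le \VarObsRound(\pi_{(1)} @ f)$. This holds because Bob's variance conditions on more than the observer's variance does: it additionally conditions on $Y$. Conditioning can only lower expected conditional variance, by the law of total variance.

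\textbf{Telescoping.} Summing the one-step inequality over $k = n, \ldots, 2$ and adding the base case, the intermediate observer variances cancel. What remains is
\begin{align*}
\sum_{i=1}^{n} \VarBobRound(\pi^i @ f) \le \VarObsRound(\pi @ f^{+n}) + (n-1)\delta^2 \le \VarObsRound(\pi @ f^{+n}) + n\delta^2 .
\end{align*}

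\textbf{Main obstacle.} The delicate point is checking that \Cref{clm:decomp_var_oneway} and \Cref{clm:decomp_var_oneway_v2} genuinely apply to the intermediate protocols $\pi_{(k)}$. These protocols carry extra public randomness (the previously sampled $Y_j$) and transcripts enlarged by the earlier $Z_j$'s. Both claims were stated for an arbitrary $r$-round smooth protocol over a product distribution. I will argue that $\pi_{(k)}$ fits this format once its public randomness is taken to be $R\,Y_{>k}$ and its transcript to be $M$ together with $Z_{>k}$. The step to verify is that the rectangle property (\Cref{prop:rectangle}) still holds for this representation, because it is what the proof of Claim 5.2 relies on. I expect it does, since each appended $Z_j$ is a function of Alice's view alone.
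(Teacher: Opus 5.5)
Your proposal is correct and is the paper's argument: combine \Cref{clm:decomp_var_oneway} with \Cref{clm:decomp_var_oneway_v2} to get the one-step inequality, then telescope. You are also right that the summand should read $\pi^i$, and your explicit base case and the resulting $(n-1)\delta^2$ spell out what the paper leaves implicit. Your remaining worry resolves as you expect: each $\pi_{(k)}$ is a genuine $r$-round protocol over $\mu^k$ whose public randomness $RY_{>k}$ is independent of the inputs and whose appended $Z_{>k}$ is computed from Alice's view, so the rectangle property and both claims apply to it verbatim.
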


\begin{proof}
    Combining \Cref{clm:decomp_var_oneway} and \Cref{clm:decomp_var_oneway_v2} gives $$
\VarBobRound(\pi^n @ f) + 
\VarObsRound(\pi^{<n} @ f^{+(n-1)}) \leq + \VarObsRound(\pi @ f^{+(n-1)}) + \delta^2.$$ Telescoping this inequality gives the desired bound.
\end{proof}

%% file: 5-variance-1way.tex
\section{Decomposition of Information Cost}
\label{sec:info}

In this section, we now show that the information cost also decomposes linearly, up to a ``constant'' additive loss. Here, we switch views to a direct construction of each protocol $\pi^i$, equivalent to the construction in \Cref{sec:var}.

\begin{figure}[H]
    \centering
    \begin{tabular}{|p{15.7cm}|}
    \hline ~\\
    \multicolumn{1}{|c|}{\textbf{A protocol $\pi^{i}$ over the input distribution $(X',Y') \sim \mu$}} \\ 
    \\ \hline
    \vspace{-2.85mm}
    \begin{enumerate}
        \item Alice pretends that her input consists of $n$ coordinates, and writes it as $X = (X_1,\ldots,X_n)$. 
        \hspace{-1mm}
        \item Bob pretends that his input consists of $n$ coordinates, and writes it as $Y = (Y_1,\ldots,Y_n)$.
        \hspace{-1mm}
        \item Alice embeds $X'$ into $X_i$, and Bob embeds $Y'$ into $Y_i$.
        \hspace{-1mm}
        \item Players use public randomness to jointly draw $X_{<i}Y_{> i}$.
        Then, Alice privately draws $X_{> i}$ conditioned on $Y_{>i}$, and Bob privately draws $Y_{<i}$ conditioned on $X_{<i}$. 
        \hspace{-1mm} 
        \item Alice and Bob simulate the $r$ rounds of communication $M = (M_1,\ldots,M_r)$ with respect to the protocol $\pi$ and the input $(X,Y)$.
        \vspace{-1mm}
        
        \item For each $j =i+1,...,n$, Alice computes $Z_j = (Z_j^{+1}, Z_j^{-1})$ where $$Z_j^{+1} := \Esub{Y_{<j}} \left[f^+(X_{<j}, Y_{<j}) \mid MRX_{<j}Y_{\geq j}, A=+1\right] \text{ rounded to the closest multiple of $\delta$,}$$ $$Z_j^{-1} := \Esub{Y_{<j}} \left[f^+(X_{<j}, Y_{<j}) \mid MRX_{<j}Y_{\geq j}, A=-1\right] \text{ rounded to the closest multiple of $\delta$,}$$ and sends them to Bob together with her last message.
        
        \item Bob generates $A$.
    \end{enumerate}
    \vspace{0mm}
    \\
    \hline
    \end{tabular}
    \caption{A direct construction of $\pi^i$.}
    \label{pi_n_new}
\end{figure}

Equivalently, we may view $\pi^i$ as a distribution $\pi^{i} = (X_{i}, Y_{i}, (M, Z_{>i}), A, RX_{<i}Y_{> i})$ which is precisely the distribution described in \Cref{sec:var}. As a crucial observation, we note that all $\pi^i$'s can be coupled from a single random process with the underlying distribution $\pi$.

Before analyzing the information costs, we first define helper variables and prove related bounds. For each $i \in [n]$ and $a \in \{\pm 1\}$, denote $D_i^a = Z_{i+1}^a - Z_i^a$, $H_i = MXY_{\geq i}$, and $S_i^a = \E \left[f^+(X, Y) \mid H_i, A=a\right]$.

\begin{claim} The following items hold.

\begin{enumerate}
    \item $H_{i+1} \subseteq H_i$
    \item $H_i$ determines $S_{\geq i}^{+1}, S_{\geq i}^{-1}, \hat{Z}_{\geq i}^{+1}, \hat{Z}_{\geq i}^{-1}, Z_{\geq i}^{+1}, Z_{\geq i}^{-1}$
    \item $(S_{n}^{+1},\ldots,S_1^{+1})$ is a (Doob) martingale under $\pi\mid A=+1$
    \item $(S_{n}^{-1},\ldots,S_1^{-1})$ is a (Doob) martingale under $\pi\mid A=-1$
\end{enumerate}
\label{claim:properties}
\end{claim}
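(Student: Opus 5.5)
The four items are bookkeeping about which variables are deterministic functions of which; the martingale items then follow from the tower property. Throughout, I treat the public randomness $R$ as part of $M$, so that $H_i = MRXY_{\geq i}$. The conditioning events in the definitions of $Z_j^a$ and $S_i^a$ are then of the form used in \Cref{pi_n_new}.

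\textbf{Items 1 and 2.} For item 1, $Y_{\geq i+1}$ is a subtuple of $Y_{\geq i}$, and $M,R,X$ appear in both $H_i$ and $H_{i+1}$. Hence $H_{i+1}$ is a deterministic function of $H_i$, and $\sigma(H_{i+1}) \subseteq \sigma(H_i)$.

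For item 2, fix $j \geq i$.
\begin{itemize}
\item The quantity $\hat Z_j^a = \E[f^+(X_{<j},Y_{<j}) \mid MRX_{<j}Y_{\geq j}, A=a]$ is a function of $(M,R,X_{<j},Y_{\geq j})$. This tuple is contained in $H_i$, because $X_{<j}$ is part of $X$ and $Y_{\geq j}$ is part of $Y_{\geq i}$.
\item Rounding to the nearest multiple of $\delta$ is deterministic, so $Z_j^a$ is also a function of $H_i$.
\item By definition $S_j^a$ is a function of $H_j$, and by item 1 iterated $H_j$ is a function of $H_i$.
\end{itemize}
Smoothness guarantees $\Pr[A=a \mid X,Y,M,R] \geq 0.001$. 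So every conditional expectation given $A=a$ is well defined on every realization of positive probability, and these functions are defined everywhere they are needed.

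\textbf{Items 3 and 4.} Fix $a \in \{\pm1\}$ and work in the probability space $\pi \mid A=a$. The $\sigma$-algebras $\mathcal{F}_k := \sigma(H_{n+1-k})$ for $k=1,\ldots,n$ form a filtration by item 1, since they grow as the index $i$ decreases. The variable $S_i^a$ is exactly $\E_{\pi\mid A=a}[f^+(X,Y) \mid H_i]$, so $(S_n^a,\ldots,S_1^a)$ is the Doob martingale of the random variable $f^+(X,Y)$ with respect to this filtration.

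It remains to check the two martingale conditions.
\begin{itemize}
\item Integrability holds because $|f^+| \leq n$.
\item Adaptedness holds because $S_i^a$ is $\sigma(H_i)$-measurable by item 2.
\item The martingale property follows from the tower property, using $\sigma(H_{i+1}) \subseteq \sigma(H_i)$:
\[
\E[S_i^a \mid H_{i+1}, A=a] = \E\bigl[\E[f^+ \mid H_i, A=a] \bigm| H_{i+1}, A=a\bigr] = S_{i+1}^a.
\]
\end{itemize}

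The only point needing care is not mathematical depth but consistency of conditioning. One must make sure the expectation defining $S_i^a$ is taken under the measure $\pi \mid A=a$, rather than under $\pi$ with $A$ fixed only afterwards. One must also include $R$ in $H_i$ so that items 2 and 3 use the same $\sigma$-algebras. Once both are fixed, every step is immediate.
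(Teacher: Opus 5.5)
Your proposal is correct and follows the paper's proof: items 1, 3, and 4 come from the nesting of the $H_i$ together with the tower property, exactly as in the paper. The only difference is in item 2, where you note directly that $(M,R,X_{<j},Y_{\geq j})$ is a sub-tuple of $H_i$ for $j\geq i$. The paper instead invokes the rectangle property to rewrite $\hat Z_i^a$ as an expectation conditioned on $H_i$; that step is not needed for determination here, but the paper reuses it to obtain $S_i^a=\hat Z_i^a+f^+(X_{\geq i},Y_{\geq i})$ in \Cref{clm:sum_di_sq}.
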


\begin{proof} The first item is immediate from the description of $H_i$, and as a result, the item 3 and 4 are also immediate. It remains to justify item 2. $H_i$ determines $(S_i^{+1}, S_i^{-1})$ by definition. Using the rectangle property (\Cref{prop:rectangle}), we can write for each $a \in \{\pm 1\}$
    \begin{align*}
        \hat{Z}_i^a = \Esub{Y_{< i}} \left[f^+(X_{<i}, Y_{<i}) \mid MRX_{<i}Y_{\geq i}, A=a\right] & = \Esub{Y_{< i}} \left[f^+(X_{<i}, Y_{<i}) \mid MXY_{\geq i}, A=a\right] \\
        & = \Esub{Y_{< i}} \left[f^+(X_{<i}, Y_{<i}) \mid H_i, A=a\right]
    \end{align*}
which means a fixing of $H_i$ determines $(\hat{Z}_i^{+1},\hat{Z}_i^{-1})$, and therefore determines $(Z_i^{+1},Z_i^{-1})$ via rounding. Since $H_{>i} \subseteq H_i$, it also determines $S_{> i}^{+1}, S_{>i}^{-1}, \hat{Z}_{> i}^{+1}, \hat{Z}_{> i}^{-1}, Z_{> i}^{+1}, Z_{> i}^{-1}$
\end{proof}

\begin{claim}
 For each $a\in\{\pm1\}$, we have
$
\sum_{i=1}^{n-1}\Esub{\pi}[(S_{i+1}^a-S_i^a)^2]\leq1000n.
$
\label{claim:sum_diff_sq}
\end{claim}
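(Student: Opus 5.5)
The plan is to use orthogonality of martingale increments under the conditioned measure $\pi \mid A=a$. Smoothness will then transfer the bound back to the unconditioned measure $\pi$, at the cost of the factor $1000$.

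\textbf{Step 1 (orthogonality under $\pi\mid A=a$).} Fix $a\in\{\pm1\}$. By \Cref{claim:properties}, $(S_n^a,\ldots,S_1^a)$ is a Doob martingale under $\pi\mid A=a$ with respect to the growing filtration $H_n\subseteq\cdots\subseteq H_1$. All $S_i^a$ are bounded by $n$ in absolute value, so they are square integrable. Increments of a square-integrable martingale are orthogonal, which gives
\begin{align*}
\sum_{i=1}^{n-1}\Esub{\pi\mid A=a}\bigl[(S_{i+1}^a-S_i^a)^2\bigr]
=\Esub{\pi\mid A=a}\bigl[(S_1^a)^2\bigr]-\Esub{\pi\mid A=a}\bigl[(S_n^a)^2\bigr]
\leq \Esub{\pi\mid A=a}\bigl[(S_1^a)^2\bigr].
\end{align*}
Since $H_1=MXY$ determines $(X,Y)$, we have $S_1^a=f^+(X,Y)$. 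The right-hand side is therefore $\E[(f^+(X,Y))^2\mid A=a]$.

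\textbf{Step 2 (change of measure via smoothness).} The expectation in the claim is under $\pi$, not under $\pi\mid A=a$. Every quantity involved is a nonnegative function $g$ of $H_1$. For any such $g$,
\[
\Esub{\pi\mid A=a}[g]=\frac{\Esub{\pi}\bigl[g\cdot\Pr[A=a\mid H_1]\bigr]}{\Pr[A=a]}.
\]
Smoothness gives $\Pr[A=a\mid XYMR]\geq 0.001$. Averaging over $R$ conditioned on $MXY$ yields $\Pr[A=a\mid H_1]\geq 0.001$. Hence
\[
\Esub{\pi}[g]\leq 1000\,\Esub{\pi}\bigl[g\Pr[A=a\mid H_1]\bigr]=1000\Pr[A=a]\,\Esub{\pi\mid A=a}[g].
\]
We apply this with $g=\sum_i (S_{i+1}^a-S_i^a)^2$, which is a function of $H_1$ by item 2 of \Cref{claim:properties}. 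Combining with Step 1, the sum in the claim is at most
\[
1000\Pr[A=a]\,\E\bigl[(f^+)^2\mid A=a\bigr]=1000\,\E\bigl[(f^+)^2\mathbb{1}_{A=a}\bigr]\leq 1000\,\E\bigl[(f^+)^2\bigr].
\]

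\textbf{Step 3.} Under $\pi$ the pairs $(X_i,Y_i)$ are i.i.d.\ from $\mu$, with $f\in\{\pm1\}$ and $\E_\mu f=0$. The cross terms in $(f^+)^2$ therefore vanish, so $\E[(f^+)^2]=n$ and the bound $1000n$ follows.

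The main obstacle is Step 2, namely passing from the conditional martingale to unconditional expectations. The key observation is to weight by $\Pr[A=a\mid H_1]$ rather than dividing by $\Pr[A=a]$, since dividing would lose a factor depending on $\Pr[A=a]$. One also has to check that smoothness survives dropping $R$ from the conditioning, which holds because the bound $0.001$ is preserved under averaging over $R$.
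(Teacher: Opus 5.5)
Your proof is correct and follows the paper's argument step for step. Orthogonality of the martingale increments under $\pi\mid A=a$ bounds the conditional sum by $\E[(f^+)^2\mid A=a]$, smoothness moves this to $\pi$ at a factor of $1000$, and $\Pr[A=a]\,\E[(f^+)^2\mid A=a]\le \E[(f^+)^2]=n$ finishes. Your remark that the smoothness bound survives averaging over $R$ (since $H_1=MXY$) adds a little care where the paper simply writes $H_1=MRXY$.
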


\begin{proof}
We shall prove the statement for $a=+1$. Denote $\pi^+=\pi\mid A=+1$. Since $(S_{n}^{+1},\ldots,S_1^{+1})$ is a martingale under $\pi^+$, we have
\begin{align*}
\sum_{i=1}^{n-1}\Esub{\pi^+}[(S_{i+1}^{+1}-S_i^{+1})^2]
&=\Esub{\pi^+}[(S_1^{+1})^2]
-\Esub{\pi^+}[(S_n^{+1})^2] \leq\Esub{\pi^+}[f^+(X,Y)^2],
\end{align*}
where the last step uses $S_1^{+1}=f^+(X,Y)$.

Since $\sum_{i=1}^{n-1}(S_{i+1}^{+1}-S_i^{+1})^2$ is determined solely by $H_1 = MRXY$, we have
\begin{align*}
\pi(A=+1) \cdot \Esub{\pi^+}\left[\sum_{i=1}^{n-1}(S_{i+1}^{+1}-S_i^{+1})^2\right] 
&= \Esub{\pi}\left[\sum_{i=1}^{n-1}(S_{i+1}^{+1}-S_i^{+1})^2 \cdot \mathbbm{1}(A=+1)\right]\\
&= \Esub{\pi}\left[ \Esub{MRXY} \left[ \sum_{i=1}^{n-1}(S_{i+1}^{+1}-S_i^{+1})^2 \cdot \mathbbm{1}(A=+1) \mid MRXY\right]\right]\\
& \geq \Esub{\pi}\left[ 0.001 \cdot  \Esub{MRXY} \left[ \sum_{i=1}^{n-1}(S_{i+1}^{+1}-S_i^{+1})^2\right]\right]\\
& = 0.001 \cdot \Esub{\pi}\left[\sum_{i=1}^{n-1}(S_{i+1}^{+1}-S_i^{+1})^2\right] 
\end{align*}
where the inequality uses the fact that conditioned on $H_1 = MRXY$, the sum $\sum_{i=1}^{n-1}(S_{i+1}^{+1}-S_i^{+1})^2$ is determined and also the probability that $A=+1$ is at least 0.001 due to the smoothing assumption.
Combining the two bounds gives
\begin{align*}
\sum_{i=1}^{n-1}\Esub{\pi}[(S_{i+1}^{+1}-S_i^{+1})^2]
&\leq1000 \cdot \pi(A=+1) \cdot \Esub{\pi^+} \left[f^+(X,Y)^2\right]\\
&=1000\cdot \Esub{\pi}\left[f^+(X,Y)^2 \cdot \mathbbm{1}(A=+1)\right]\\
&\leq 1000\cdot \Esub{\pi}\left[f^+(X,Y)^2\right]\\
&\leq1000n.
\end{align*}
\end{proof}

\begin{claim}
For each $a\in\{\pm1\}$, we have
$
\sum_{i=1}^{n-1}\Esub{\pi}[(D_i^a)^2]=O(n).
$
\label{clm:sum_di_sq}
\end{claim}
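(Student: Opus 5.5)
We relate $D_i^a = Z_{i+1}^a - Z_i^a$ to the martingale increments $S_{i+1}^a - S_i^a$, whose squared sum is already bounded by \Cref{claim:sum_diff_sq}. The key identity is
\[
\hat{Z}_i^a = \E\left[f^+(X_{<i},Y_{<i}) \mid H_i, A=a\right] = S_i^a - \E\left[f^+(X_{\geq i},Y_{\geq i}) \mid H_i, A=a\right] = S_i^a - f^+(X_{\geq i},Y_{\geq i}),
\]
where the last step holds because $H_i = MXY_{\geq i}$ determines $X_{\geq i}$ and $Y_{\geq i}$. This uses the rectangle-property rewriting from item 2 of \Cref{claim:properties}, and it also requires $R$ to be included in $H_i$ (equivalently absorbed into $M$). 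Strictly speaking, $S_i^a$ should be defined conditioned on $H_i$ with $R$ included, since $H_1=MRXY$ is what \Cref{claim:sum_diff_sq} uses. With this in place, the index shift gives
\[
\hat{Z}_{i+1}^a - \hat{Z}_i^a = S_{i+1}^a - S_i^a + f(X_i,Y_i).
\]
Indeed, $f^+(X_{\geq i},Y_{\geq i}) - f^+(X_{>i},Y_{>i}) = f(X_i,Y_i)$.

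Next we handle the rounding. Since each $Z_j^a$ is within $\delta/2 \le 1$ of $\hat{Z}_j^a$, we have $|D_i^a| \le |S_{i+1}^a - S_i^a| + |f(X_i,Y_i)| + \delta \le |S_{i+1}^a-S_i^a| + 2$. Applying $(u+v)^2 \le 2u^2 + 2v^2$ gives
\[
(D_i^a)^2 \le 2(S_{i+1}^a - S_i^a)^2 + 8.
\]
Summing over $i=1,\ldots,n-1$ and taking expectations under $\pi$, \Cref{claim:sum_diff_sq} gives $\sum_i \E_\pi[(D_i^a)^2] \le 2000n + 8n = O(n)$.

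The one point needing care is the identity $\hat{Z}_i^a = S_i^a - f^+(X_{\geq i},Y_{\geq i})$. It requires two things: first, that conditioning on $MRX_{<i}Y_{\geq i}A$ is equivalent to conditioning on $H_i$ together with $A=a$ when computing the expectation of $f^+(X_{<i},Y_{<i})$; second, that $X_{\geq i}$ is determined by $H_i$. The first is exactly what the rectangle property (\Cref{prop:rectangle}) gives, and is already carried out in the proof of \Cref{claim:properties}. The second is immediate from the definition of $H_i$. The boundary index $i=n$ (where $Z_{n+1}$ may be taken as $0$) is excluded from the sum, so no special treatment is needed.
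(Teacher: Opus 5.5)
Your proof is correct and follows the paper's argument: you use the rectangle property to write $\hat Z_i^a = S_i^a - f^+(X_{\geq i},Y_{\geq i})$, obtain $\hat Z_{i+1}^a-\hat Z_i^a=(S_{i+1}^a-S_i^a)+f(X_i,Y_i)$, absorb the $f$ and rounding terms into an additive constant, and invoke \Cref{claim:sum_diff_sq}. Your two-sided bound $|D_i^a|\le |S_{i+1}^a-S_i^a|+2$ is slightly cleaner than the paper's one-sided ``$D_i^a\le (S_{i+1}^a-S_i^a)+3$'' (which also has a sign slip in the last rounding term), and your remark that $R$ must be included in $H_i$ correctly points out a notational inconsistency in the paper.
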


\begin{proof}
Recall by the rectangle property (\Cref{prop:rectangle}) that     \begin{align*}
        \hat{Z}_i^a = \Esub{Y_{< i}} \left[f^+(X_{<i}, Y_{<i}) \mid MRX_{<i}Y_{\geq i}, A=a\right] & = \Esub{Y_{< i}} \left[f^+(X_{<i}, Y_{<i}) \mid MXY_{\geq i}, A=a\right] \\
        & = \Esub{Y_{< i}} \left[f^+(X_{<i}, Y_{<i}) \mid H_i, A=a\right]
    \end{align*}
and therefore $S_i^a=\hat Z_i^a+f^+(X_{\geq i},Y_{\geq i}).$
Consequently, we have $\hat Z_{i+1}^a-\hat Z_i^a
=(S_{i+1}^a-S_i^a)+f(X_i,Y_i).$ Recall also that $Z_i^a$ is $\hat{Z}_i^a$ rounded to the closest multiple of $\delta \in (0,1)$. Thus, we shall have
\begin{align*}
    D_i^a = Z_{i+1}^a - Z_{i}^a & = (S_{i+1}^a-S_i^a)+f(X_i,Y_i) + (Z_{i+1}^a-\hat{Z}_{i+1}^a) + (Z_{i}^a-\hat{Z}_{i}^a) \\
    & \leq (S_{i+1}^a-S_i^a) + 3
\end{align*}
which it follows that $(D_i^a)^2
\leq2(S_{i+1}^a-S_i^a)^2+18.$
Taking expectations and summing over $1\leq i<n$ with \Cref{claim:sum_diff_sq} completes the proof.
\end{proof}

We now prove the information cost decomposition. 

\begin{claim}
$
\sum_{i=1}^n\IC(\pi^i)
\leq\IC(\pi)+O(n/\delta^2)
$
\label{claim:IC_decomp}
\end{claim}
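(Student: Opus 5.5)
We split each $\IC(\pi^i)$ into its two directional terms and telescope each family separately, charging all extra cost to the entropy of the rounded differences $D_i^a$. Recall $\pi^i=(X_i,Y_i,(M,Z_{>i}),A,RX_{<i}Y_{>i})$. Then
\[
\IC(\pi^i)=I(MZ_{>i}:X_i\mid Y_iRX_{<i}Y_{>i})+I(MZ_{>i}:Y_i\mid X_iRX_{<i}Y_{>i}).
\]

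\textbf{Alice-to-Bob terms.} For the first term, use the chain rule to split off $Z_{>i}$:
\[
I(M:X_i\mid RX_{<i}Y_{\geq i})+I(Z_{>i}:X_i\mid MRX_{<i}Y_{\geq i}).
\]
By the chain rule over $i$, the sum of the first pieces is exactly $\sum_i I(M:X_i\mid RX_{<i}Y)$. The conditioning on $Y_{<i}$ is added for free, because $Y_{<i}$ is independent of $X_i$ given $RX_{<i}Y_{\geq i}$, which only increases the information. This sum is $I(M:X\mid RY)$.

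The second piece is where the loss lives. Since $Z_{>i+1}$ is a function of $H_{i+1}\subseteq MRX_{\le i}Y_{>i}$ by \Cref{claim:properties}, I would bound
\[
I(Z_{>i}:X_i\mid MRX_{<i}Y_{\ge i})\le H(Z_{i+1}\mid MRX_{<i}Y_{\geq i}Z_{>i+1})\le H(Z_{i+1}-Z_{i+2}\mid \cdot).
\]
That is, at most $H(D_{i+1}^{+1})+H(D_{i+1}^{-1})$, viewing $D_{i+1}^a/\delta$ as an integer-valued variable. The edge case $i=n-1$, where $Z_{n+1}$ does not exist, is handled by a separate $O(1)$ or $O(1/\delta^2)$ term, using $\E[(\hat Z_n^a)^2]=O(n)$.

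Now apply \Cref{lem:entropy_bound}:
\[
H(D^a_{i+1}/\delta)\le c_0+c_1\E[(D^a_{i+1})^2]/\delta^2.
\]
Summing over $i$ and using \Cref{clm:sum_di_sq} gives $O(n)+O(n/\delta^2)=O(n/\delta^2)$.

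\textbf{Bob-to-Alice terms.} For the second term, $Z_{>i}$ is computed by Alice from $MRX_{<i}X_iY_{>i}$ together with her private $X_{>i}$. Since $Y_i$ is already conditioned on everything Alice used except $X_{>i}$, I would argue via the rectangle property (\Cref{prop:rectangle}) that
\[
I(MZ_{>i}:Y_i\mid X_{\le i}RY_{>i})\le I(M:Y_i\mid X_{\le i}RY_{>i})+I(Z_{>i}:Y_i\mid MRX_{\le i}Y_{>i}).
\]
The key point for the second summand is that $Z_{>i}$ is determined by $H_{i+1}=MRXY_{>i}$. Given $MRX_{\le i}Y_{>i}$, the variables $X_{>i}$ and $Y_i$ are independent by the rectangle property, so this summand is $0$. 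The first summand is at most $I(M:Y_i\mid XRY_{>i})$, again by rectangle-type independence of $X_{>i}$ from $Y_i$. Its sum over $i$ telescopes to $I(M:Y\mid XR)$.

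\textbf{Main obstacle.} The delicate step is the Alice-side bound on $I(Z_{>i}:X_i\mid\cdots)$. It requires that the conditioning already determines $Z_{>i+1}$ and "nearly" $Z_{i+1}$ up to $D_{i+1}$, so that one term of rounded increments is paid per coordinate rather than $O(\log n)$ bits for $Z_{i+1}$ outright. I would first verify that $Z_{i+2}$ is a function of $MRX_{<i}Y_{\geq i}$ together with $X_i$ and Alice's $X_{>i+1}$. Then, using the chain rule, I would write
\[
I(Z_{i+1}:X_i\mid \ldots Z_{>i+1})\le H(D_{i+1}\mid\ldots),
\]
shifting the index by one where needed. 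I would check carefully that the conditioning used to express $Z_{i+1}$ as $Z_{i+2}-D_{i+1}$ is legitimate, and that the $\IC(\pi)$-telescoping parts are unaffected.
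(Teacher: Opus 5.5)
There is a genuine gap in your bound on the Alice-to-Bob term $I(Z_{>i}:X_i\mid MRX_{<i}Y_{\ge i})$. Your first inequality silently drops the chain-rule piece $I(Z_{>i+1}:X_i\mid MRX_{<i}Y_{\ge i})$. You justify this by saying $Z_{>i+1}$ is a function of $H_{i+1}\subseteq MRX_{\le i}Y_{>i}$, but that containment is false. $H_{i+1}=MRXY_{>i}$ contains all of $X$, as you yourself use on the Bob-to-Alice side. More to the point, every $Z_j$ with $j\ge i+2$ depends on $X_{<j}$, hence on $X_i$, which is exactly the variable missing from the conditioning. So the dropped piece is in general nonzero, and your per-coordinate inequality fails.

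For example, take $f(x,y)=x$ with $x$ uniform on $\{\pm1\}$, the trivial protocol (no communication, smoothed random answer), and $\delta=1/2$. Then $Z_j^{\pm1}=\sum_{k<j}X_k$. For $i\le n-2$ the left-hand side is $H(X_i)=1$ bit, whereas $H(Z_{i+1}\mid MRX_{<i}Y_{\ge i}Z_{>i+1})=H(X_i\mid X_i+X_{i+1})=1/2$ bit. Shifting the index does not repair this. What one can show (using \Cref{prop:rectangle}) is that the $i$-th term equals $H(Z_{>i}\mid MRX_{<i}Y)-H(Z_{>i+1}\mid MRX_{\le i}Y)$. This telescopes over $i$ to $H(Z\mid MRY)$, so the entropy budget must be applied to the sum, not to each term.

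The missing idea is to pay for $Z$ in aggregate rather than coordinate by coordinate, as the paper does. Enlarge $Z_{>i}$ to the whole vector $Z$ and add $Y_{<i}$ to the conditioning. This is free, since $X_i\perp Y_{<i}\mid MRX_{<i}Y_{\ge i}$ by \Cref{prop:rectangle}. The chain rule then gives
\[
\sum_{i=1}^n I(Z_{>i}:X_i\mid MRX_{<i}Y_{\ge i})\le\sum_{i=1}^n I(Z:X_i\mid MRX_{<i}Y)=I(Z:X\mid MRY)\le H(Z^{+1})+H(Z^{-1}).
\]
Since $Z_1^a=0$, we have $H(Z^a)=H(D_1^a,\dots,D_{n-1}^a)\le\sum_i H(D_i^a)$. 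Your application of \Cref{lem:entropy_bound} with \Cref{clm:sum_di_sq} then finishes the bound, with no edge case.

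The rest of your proposal is sound. The $I(M:X_i\mid\cdot)$ pieces telescope to $I(M:X\mid RY)$. On the Bob-to-Alice side, $Z_{>i}$ adds nothing: it is a function of $MRX_{\le i}Y_{>i}$ together with $X_{>i}$, and $X_{>i}\perp Y_i\mid MRX_{\le i}Y_{>i}$. The paper leaves that side implicit, so your treatment of it is a useful addition.
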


\begin{proof}
First, consider the LHS term from Bob's contribution. By the chain rule,
\begin{align*}
\sum_{i=1}^n I(MZ_{>i}:X_i\mid RX_{<i}Y_{\geq i})
&=\sum_{i=1}^n I(M:X_i\mid RX_{<i}Y_{\geq i})+\sum_{i=1}^n I(Z_{>i}:X_i\mid MRX_{<i}Y_{\geq i}).
\end{align*}
For the first RHS term, we bound:
\begin{align*}
    \sum_{i=1}^n  I(M: X_i \mid RX_{<k} Y_{\geq k}) & \leq \sum_{i=1}^n  I(MY_{<k}: X_i \mid RX_{<k} Y_{\geq k}) \\
    & = \sum_{i=1}^n  I(Y_{<k}: X_i \mid RX_{<k} Y_{\geq k}) + \sum_{i=1}^n  I(M: X_i \mid RX_{<k} Y) \tag{chain rule} \\
    & = I(M:X \mid RY) \tag{rectangle property and chain rule.}
\end{align*}
And for the second RHS term, we bound:
\begin{align*}
    \sum_{i=1}^n I(Z_{\geq i} : X_i \mid RX_{<i} Y_{\geq i}) & \leq  \sum_{i=1}^n I(ZY_{<i} : X_i \mid RX_{<i} Y_{\geq i}) \\
    & =  \sum_{i=1}^n I(Y_{<i} : X_i \mid RX_{<i} Y_{\geq i}) +  \sum_{i=1}^n I(Z : X_i \mid RX_{<i} Y) \tag{chain rule} \\
    & = I(Z:X \mid RY) \tag{rectangle property and chain rule}\\
    & \leq H(Z) \\
    & \leq H(Z^{+1}) + H(Z{-1}) \tag{$Z = (Z^{+1}, Z^{-1})$ and subadditivity}
\end{align*}

We will now bound $H(Z^{+1})$ and then the bound of $H(Z^{-1})$ follows analogously. Notice that $Z_1^{+1}=0$, and $Z^a$ and $(D_1^a,\ldots,D_{n-1}^a)$ determine each other. Moreover, $D_i^a/\delta$ is integer-valued by our rounding scheme. Applying \Cref{lem:entropy_bound}, we have
\begin{align*}
H(Z^{+1})
=H(D_1^{+1},\ldots,D_{n-1}^{+1}) \leq \sum_{i=1}^{n-1}H(D_i^{+1}) & = \sum_{i=1}^{n-1}H(D_i^{+1}/\delta)\\
&\leq c_0(n-1)+\frac{c_1}{\delta^2} \cdot 
\sum_{i=1}^{n-1}\Esub{\pi}[(D_i^a)^2]\\
&=O(n/\delta^2)
\end{align*}
where the final step uses \Cref{clm:sum_di_sq}. Hence $H(Z)=O(n/\delta^2)$.

Combining all the bounds above completes the proof.
\end{proof}

%% file: 6-discussion.tex
\section{Discussion}
\label{AI}

\paragraph{Comparison with \cite{SY26}.}
Our construction of $(\pi^1,\ldots,\pi^n)$ differs
from that of Sawettamalya and Yu only in the rounding scheme.
Specifically, we round each $\hat{Z}_i^a$ to the nearest multiple of
$\delta=\sqrt{\Delta}$, whereas they round $(\hat{Z}_i^a)^2$ to the nearest
multiple of $\Delta$. This change has two knock-on effects.
\begin{enumerate}
    \item Their rounding scheme makes the analogue of
    \Cref{clm:decomp_var_oneway_v2} immediate. Under our scheme,
    this bound requires a more involved argument, given in the
    proof of \Cref{clm:decomp_var_oneway_v2}.
    
    \item They bound the information cost incurred by $Z$ trivially by its length, obtaining $O(n\log(n/\Delta))$. We instead
    bound this cost by the \emph{entropy} of $Z$, which
    \Cref{claim:IC_decomp} bounds by $O(n/\Delta)$, and which is $O(n)$ for the choice of constant $\Delta$.
\end{enumerate}

It is also worth mentioning that the proof of \cite{SY26}
constructs the desired protocol $\eta$ through a
\emph{decomposition} process. A more direct and traditional
construction is through an \emph{embedding} approach, introduced by the seminal work of \cite{BBCR13}. Our proof
actually combines these two views, as presented in
\Cref{sec:var} and \Cref{sec:info}, respectively.

\paragraph{AI Usage.} At an early stage of this work, the author observed that the sequence $(S_n^a,\ldots,S_1^a)$ forms a martingale (\Cref{claim:properties}) whose variance is linearly bounded (\Cref{claim:sum_diff_sq}), which subsequently bound the entropy of $Z$. GPT 5.6 Sol helped formalize this observation by proposing an entropy bound for integer-valued random variables, which the author ultimately replaced with \Cref{lem:entropy_bound}. It also identified the need for the ``smoothing'' assumption, addressing a gap that the author had missed in the initial argument. The author takes full responsibility in the contents of this paper.

The manuscript was prepared with assistance from GPT 6.1 Sol in polishing the exposition and correcting minor errors.